\newcommand{\ie}{\textit{i.e.,~}}
\newcommand{\ack}{\texttt{ACK}}
\newcommand{\nack}{\texttt{NACK}}
\newtheorem{theorem}{Theorem}
\newtheorem{remark}{Remark}
\newcommand*\bigcdot{\mathpalette\bigcdot@{.8}}
\newcommand*\bigcdot@[2]{\mathbin{\vcenter{\hbox{\scalebox{#2}{$\m@th#1\bullet$}}}}}
\newenvironment{list4}{
	\begin{list}{$\bullet$}{%
			\setlength{\itemsep}{0.05cm}
			\setlength{\labelsep}{0.2cm}
			\setlength{\labelwidth}{0.3cm}
			\setlength{\parsep}{0in} 
			\setlength{\parskip}{0in}
			\setlength{\topsep}{0in} 
			\setlength{\partopsep}{0in}
			\setlength{\leftmargin}{0.16in}}}
	{\end{list}}
\begin{document}
\title{Remote Estimation over Packet-Dropping Wireless Channels\\ with Partial State Information}

\author{Ioannis~Tzortzis, Evagoras~Makridis, Charalambos~D.~Charalambous, and Themistoklis~Charalambous  
\thanks{The authors are with the Department of Electrical and Computer Engineering, School of Engineering, University of Cyprus, 1678 Nicosia, Cyprus.  E-mails: \texttt{\{tzortzis.ioannis,~makridis.evagoras, chadcha,~charalambous.themistoklis\}@ucy.ac.cy}.\newline T. Charalambous is also a Visiting Professor at the Department of Electrical Engineering and Automation, School of Electrical Engineering, Aalto University, 02150 Espoo, Finland.}
\thanks{This work has been partly funded by MINERVA, a European Research Council (ERC) project funded under the European Union's Horizon 2022 research and innovation programme (Grant agreement No. 101044629).}
}

\maketitle

\begin{abstract}
In this paper, we study the design of an optimal transmission policy for remote state estimation over packet-dropping wireless channels with imperfect channel state information. A smart sensor uses a Kalman filter to estimate  the system state and transmits its information to a remote estimator. Our objective is to minimize the state estimation error and energy consumption by deciding whether to transmit new information or retransmit previously failed packets. To balance the trade-off between information freshness and reliability, the sensor applies a hybrid automatic repeat request  protocol. We formulate this problem as a  finite horizon partially observable Markov decision process  with an augmented state-space that incorporates both the age of information  and the unknown channel state. By defining an information state,  we derive the dynamic programming equations for evaluating the optimal policy. This transmission policy is computed numerically using the point-based value iteration algorithm.
\end{abstract}


\section{Introduction}
\label{sec:introduction}
The penetration of smart sensor technology into mission-critical applications such as industrial automation, smart grid monitoring and control, remote healthcare, and others, enabled the transition from on-device monitoring to remote estimation and monitoring. Unlike traditional sensors that merely collect raw data, smart sensors combine sensing with integrated computational and communication capabilities. This allows smart sensors to additionally process the data locally and transmit only useful information to a remote monitor or decision-maker \cite{cocco2023remote,munari2023remote,li2019age,Tahmoores:2023TAC}. In mission-critical applications, smart sensors can effectively reduce the amount of data that need to be sent over the network towards the destined receiver, with the aim of reliably communicating pre-processed information in a timely manner. 

In networked control systems, smart sensors often employ data fusion and state estimation algorithms like Kalman filters to pre-estimate the states of a dynamical system \cite{liu2021remote,chakravorty2019remote,wang2021remote,qu2023remote}, and communicate them to the remote estimator over the network. This approach offers superior state estimation accuracy compared to traditional sensors that merely transmit raw measurement data, especially under unreliable and constrained communication channels \cite{qu2023remote}. However, communicating critical information over unreliable wireless channels can lead to severe communication impediments such as delays or even loss of information, which subsequently affect the performance of the remote estimator. These impediments arise due to channel fading, noise, interference and path loss, affecting the quality of the communication channel. 

Transmitting fresh information may be more beneficial when the probability of success is equal or even higher than retransmitting previously failed information. This is a well known trade-off between information freshness and reliability \cite{li2022age,soleymani2024transmit}. However, smart sensors have no full information regarding the quality of the channel that could allow them to decide whether to transmit new or retransmit an old information. Instead, they may only receive an acknowledgment feedback signal via \emph{(Hybrid) Automatic Repeat reQuest} protocols, to determine whether the transmitted information has been successfully received without errors. Using this protocol, the transmitter combines error correction techniques and retransmissions to ensure reliable delivery of data over noisy channels. Upon the reception of acknowledgment feedback signals (\texttt{ACK/NACK}), it updates a retransmission counter, and decides whether to discard a previously failed packet and transmit a fresh packet, or retransmit a previously failed one. However, relying only on the acknowledgement feedback to infer the quality of the channel, corresponds to an inexact and possibly delayed indicator of the \emph{channel state information} (CSI).

In \cite{you2011mean} the authors studied the stability condition for remote estimation over Markovian packet-dropping channels, focusing on transmitting the raw measurements instead of the pre-processed state estimate. Unlike \cite{you2011mean}, the results in \cite{qi2016optimal,chakravorty2019remote,huang2020real,liu2021remote}, considered the remote estimation problem over packet-dropping communication channels with Markov states, when transmitting the state estimate. In particular, these works proposed remote state estimation methods where the smart sensor can decide whether to retransmit a previously failed state estimate of higher \emph{age of information} (AoI) or to send a fresh estimate with lower probability of successful reception. 

The aforementioned works are developed under the assumption that both the smart sensor and the remote estimator have perfect knowledge of the channel state. However, in many cases, such an assumption might not hold due to rapidly changing communication channels, interference and noise, uncertainty, limited sensing range, etc. In this work, we focus on designing optimal transmission policies for remote state estimation over packet-dropping wireless channels with imperfect channel state information. The main contributions of this work are the following: (a) we formulate the problem of designing an optimal transmission policy over a wireless channel with partial CSI as a finite horizon partially observable Markov decision process (POMDP). Our approach, uses a combined state-space that incorporates both the AoI and the unknown channel state, (b) we identify an information state which satisfies the Markov property and serves as a sufficient statistic for the smart sensor's optimal decision making policy. Additionally, we derive the recursive relation for updating this information state; and (c) we solve the optimal decision problem  via dynamic programming, and numerically approximate the optimal transmission policy  via the point-based value iteration (PBVI) algorithm.


\section{System Model}
\label{sec:sys.model}

\subsection{Dynamical System and Smart Sensor}
\label{subsec:dyn.sys}
Consider a discrete-time LTI dynamical system 
\begin{subequations}\label{eq:lti_system}
\begin{align}
    x_{k}&=Ax_{k-1}+w_{k-1},\\
    y_k&=Cx_k+v_k,
\end{align}
\end{subequations}
where $k$ is the discrete time index, $x_k\in \mathbb{R}^{n}$ is the (unknown) state vector process, $y_k\in \mathbb{R}^{m}$ is the measurement vector process, $w_k\in \mathbb{R}^n$ is the process noise, $v_k\in\mathbb{R}^m$ is the measurement noise, and $A\in\mathbb{R}^{n\times n}$, $C\in\mathbb{R}^{m\times n}$ are known matrices. It is assumed that $x_0,w_0,w_1,\dots,v_0,v_1,\dots$ are mutually independent and Gaussian random variables $x_0\sim N(0,\Sigma_0)$, $w_k\sim N(0,R_w)$, $v_k\sim N(0,R_v)$ with $\Sigma_0 \succcurlyeq 0$, $R_w \succcurlyeq 0$, and $R_v \succ 0$. Here, $\cdot \succcurlyeq 0$ and $\cdot \succ 0$ denote  positive semidefiniteness and positive definiteness, respectively.
To estimate the unknown state $x_k$ from the noisy measurements $y_k$, the sensor implements a Kalman filter at each time step, leading to the following definitions.

Let us define the \emph{prior} and \emph{posterior} state estimates
\begin{align*}
    \hat{x}^s_{k|k-1}&:=\mathbb{E}[x_k|y_{1:k-1}],\\
    \hat{x}^s_{k|k}&:=\mathbb{E}[x_k|y_{1:k}],
\end{align*}
and their corresponding error covariances
\begin{align*}
    P^s_{k|k-1}&:=\mathbb{E}[(x_k-\hat{x}^s_{k|k-1})(x_k-\hat{x}^s_{k|k-1})^T|y_{1:k-1}],\\
    P^s_{k|k}&:=\mathbb{E}[(x_k-\hat{x}^s_{k|k})(x_k-\hat{x}^s_{k|k})^T|y_{1:k}],
\end{align*}
where $y_{1:k}$ denotes the sequence of noisy observations up to time $k$. The Kalman filter is given by the following prediction-correction equations 
\begin{subequations}\label{eq:kalman_filter}
\begin{align}
    \hat{x}^s_{k|k-1}&=A\hat{x}^s_{k-1|k-1},\\
    P^s_{k|k-1}&=AP^s_{k-1|k-1}A^T+R_w,\\
    K_k&=P^s_{k|k-1}C^T(CP^s_{k|k-1}C^T+R_v)^{-1},\\
    \hat{x}^s_{k|k}&=\hat{x}^s_{k|k-1}+K_k(y_k-C\hat{x}^s_{k|k-1}),\\
    P^s_{k|k}&=P^s_{k|k-1}-K_kCP^s_{k|k-1},
\end{align}
\end{subequations}
which are computed for each time step $k=1,2,\dots$. 


The smart sensor at each time $k$ must decide whether to transmit a new piece of information, \ie its current local state estimate $\hat{x}^{s}_{k|k}$, or retransmit old information, \ie an old local state estimate $\hat{x}^{s}_{k-\tau_k|k-\tau_k}$, as follows
\begin{align}\label{eq:packet_transmission}
    \hat{x}^s_k = \begin{cases}
        \hat{x}^s_{k|k}, &\text{if } \alpha_k=1,\\
        \hat{x}_{k-1}^{s}, &\text{if } \alpha_k=0,
    \end{cases}
\end{align}
where $\alpha_k\in\{$0,1$\}$ denotes the transmission action of the smart sensor at time $k$, and represents either the transmission of a new packet, \ie $\alpha_k=1$, or the retransmission of a previously failed packet, \ie $\alpha_k=0$. 

\begin{figure}[t]
    \centering
    \includegraphics[width=0.95\linewidth]{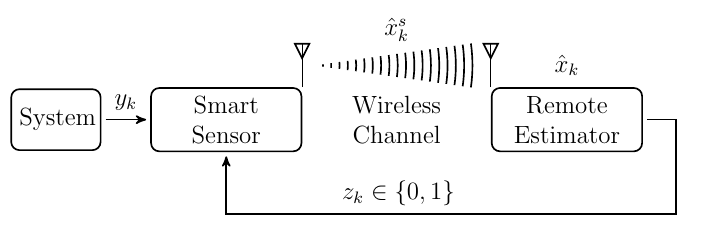}
    \caption{Remote state estimation scheme.}
    \label{fig:general-setup}
\end{figure}

\subsection{Wireless Channel and Remote Estimator}
\label{subsec:channel}
The wireless channel is considered as lossy and is modeled as a packet-erasure channel with ideal acknowledgment feedback. The quality of the channel can be modeled as a random process that varies with time in a correlated manner, although it is assumed constant within each time slot.
The state of the channel at time slot $k$ could be in one of the $n_c$ states that describe its quality, \ie $s^c_k \in \mathcal{S}^c \triangleq \{ 1, 2, \ldots, n_c \}$.
Thus, depending on the quality of the channel, a packet that is transmitted by the smart sensor to the remote estimator might arrive in error. In such a case, and with the remote estimator being incapable of correcting the error with the aid of the HARQ protocol, we say that a packet error occurred. 

The remote estimator is equipped with a HARQ protocol aiming at correcting possible errors that occurred due to channel fading. To indicate whether the packet sent by the smart sensor has been received by the remote estimator with or without errors, we introduce the  variable $z_k\in {\cal Z}\triangleq\{0,1\}$. That is, when a packet loss occurs at time $k$, the remote estimator sends a negative acknowledgment (\texttt{NACK}) back to the smart sensor, \ie $z_k=0$, while if the packet has been received without errors at time $k$, the remote estimator sends an acknowledgment (\texttt{ACK}), $z_k=1$. Here, it is important to note that the acknowledgement messages are assumed reliable. This is reasonable since acknowledgement messages are of one bit, and they are sent over narrowband and error-free feedback channels. Then, according to $z_k$, the remote state estimate $\hat{x}_k$ and its corresponding error covariance matrix $P_k$ are given by
\begin{equation}\label{eq.state.est}
    \hat{x}_k=\begin{cases}
			\hat{x}^s_{k}, & \text{if $z_{k}=1$,}\\
            A\hat{x}_{k-1}, & \text{otherwise,}
		 \end{cases}
\end{equation}
and
\begin{equation}\label{eq.error.cov}
    P_k=\begin{cases}
			P^s_{k|k}, & \text{if $z_k=1$,}\\
            AP_{k-1}A^T+R_w, & \text{otherwise}.
		 \end{cases}
\end{equation}

\subsection{Problem Setup}
The ultimate goal of the smart sensor is to minimize the mean squared error and energy consumption by optimally deciding between transmitting new information or re-transmitting  previously failed information. Given that the exact channel state is unknown, the smart sensor  relies its decision based  on its belief about the channel state. This belief is updated over time based on partial information inferred from an acknowledgment feedback signal received from the remote estimator, see Fig. \ref{fig:general-setup}.

Upon the reception of the feedback signal, the smart sensor is able to track the time elapsed since the last successful reception of information by the remote estimator. We denote this variable at time $k$ with $s^r_k \in \mathcal{S}^r \triangleq \{0,1,\ldots,n_r\}$, $n_r<\infty$, which is given by
\begin{align}\label{eq.AoI.state}
s^r_{k} = \begin{cases}
    1, & \text{ if } \alpha_{k-1} = 1 \land z_{k-1} = 0, \\
    s^r_{k-1}+1, & \text{ if } \alpha_{k-1} = 0 \land z_{k-1} = 0, \\
    0, & \text{ if } z_{k-1} = 1,
\end{cases}
\end{align}
for $k\geq0$ and with $s^r_0=0$. Clearly, when $s^r_k=0$ then only $\alpha_k=1$ can be taken. This is natural since $s^r_k=0$ represents the state where a fresh packet is to be transmitted. Moreover, when $s^r_{k-1}=n_r$, then we require $s^r_k=0$ and $\alpha_k=1$, since the number of retransmission attempts is reached.
To indicate that the AoI $s^r_{k}$ is a function of the previous state $s^r_{k-1}$, the  action $\alpha_{k-1}$ and the observation $z_{k-1}$, we write 
\begin{equation}
    s^r_{k}=L^r_{k-1}(s^r_{k-1},z_{k-1},\alpha_{k-1}).
\end{equation}

We assume that the packet error probability reduces exponentially with the number of retransmissions $r\in \mathcal{S}^r$, based on the model in \cite{ceran2019average}, that is,
\begin{equation}\label{HARQ.eqn}
    g(r,j) = q_{j} \lambda^{r},
\end{equation}
for $\lambda \in (0,1)$, and $q_j$ being the packet error probability of a freshly transmitted packet when in channel $j \in \mathcal{S}^c$. With smaller parameter $\lambda$, the probability of error decreases due to the ability of HARQ to decode the received packet correctly. With $\lambda=1$, the HARQ protocol reduces to the classical ARQ protocol without combining, where the probability of packet error is not reduced, but it is instead constant for all subsequent retransmissions. 

The AoI variable $s^r_{k}$, allows the smart sensor to determine the current state estimate of the remote estimator at time $k$, where from \eqref{eq.state.est} and \eqref{eq.error.cov}, we get
\begin{align}
    \hat{x}_{k} &= \begin{cases}
        \hat{x}^s_k, & \text{ if } s^r_k=0,\\
        A^{s^r_k} \hat{x}_{k-s^r_k}, & \text{ otherwise,}
    \end{cases}   
\end{align}
and
\begin{equation}\label{cov.eq.AoI}
    P_{k} = \begin{cases}\!
        P^s_{k|k}, &\text{if  $s^r_k{=}0$,}\\
        A^{s^r_k}P^s_{k-s_k^r}(A^{s^r_k})^T{+}\sum\limits_{j=0}^{{s^r_k}-1}A^j R_w (A^j)^T ,& \text{otherwise.}
    \end{cases}   
\end{equation}


Note that, although HARQ protocols improve transmission reliability by reducing the packet error probability with each retransmission, they also introduce challenges, such as increased latency and energy consumption. In this work, HARQ will be explicitly incorporated into the decision-making framework by optimizing transmission decisions that balance estimation accuracy, AoI, and energy consumption.

\section{Problem Formulation}\label{sec.problem.form}
\subsection{Finite horizon POMDP}
A discrete-time POMDP on a finite horizon is a collection
\begin{equation}\label{POMDP}
    ({\cal S}, {\cal A}, {\cal Z},  T(\alpha),  O(\alpha), c_k(\alpha), c_N)
\end{equation}
consisting of the following elements:
\begin{enumerate}
    \item[a)] The state space ${\cal S}:={\cal S}^c\times {\cal S}^r$ where 
 ${\cal S}^c:=\{1,2,\dots,n_c\}$ denotes the set of possible channel states, and
 ${\cal S}^r:=\{0,1,\dots,n_r\}$ denotes the set of possible AoI values. Each  $s:=(s^c,s^r)\in S$ represents the system being in channel state $s^c\in {\cal S}^c$ and having an AoI $s^r\in {\cal S}^r$.
    \item[b)] The action space ${\cal A}:=\{0,1\}$. Each $\alpha\in {\cal A}$ represents the action that the smart sensor can take. 
    \item[c)] The observation space ${\cal Z}:=\{0,1\}$. Each $z\in {\cal Z}$ provides information on whether or not the package has been successfully received by the remote estimator.
    \item[d)] The transition probability matrix $T(\alpha)$ defined by
    \begin{align}\label{POMDP.trans}
        T(\alpha)&:=\operatorname{Pr}(s_{k+1}|s_k,\alpha_k)=\operatorname{Pr}(s^c_{k+1},s^r_{k+1}|s^c_{k},s^r_{k},\alpha_k)\nonumber\\
        &=\operatorname{Pr}(s^c_{k+1}|s^c_{k},s^r_{k+1},\alpha_k)\operatorname{Pr}(s^r_{k+1}|s^c_{k},s^r_{k},\alpha_k)\nonumber\\
        &\overset{(a)}=\operatorname{Pr}(s^c_{k+1}|s^c_k)\operatorname{Pr}(s^r_{k+1}|s^c_k,s^r_k,\alpha_k),
    \end{align}
    where (a) stems from the fact that the channel state $s_{k+1}$ depends only on the current state $s_{k}$.
    Let $T^c=[\tau^c_{i,j}]$, $i,j\in {\cal S}^c$, be the channel state transition probability matrix with 
    \begin{equation}
        \tau^c_{i,j}=\operatorname{Pr}(s^c_{k+1}=j|s^c_{k}=i),
    \end{equation}
    denoting the $(i,j)$--th element of $T^c$.
    Let $T^r(\alpha)=[\tau^r_{i,\ell,q}(\alpha)]$, $i\in {\cal S}^c$, $\ell,q\in {\cal S}^r$, $\alpha\in {\cal A}$, be the AoI state transition probability matrix with
    \begin{equation}
        \tau^r_{i,\ell,q}(\alpha){=} \operatorname{Pr}(s^r_{k+1}=q|s^c_{k}=i,s^r_{k}=\ell,\alpha_k{=}\alpha),
    \end{equation}
    denoting the  $(i,\ell,q)$--th element of  $T^r(\alpha)$.
    \item[e)] The observation probability matrix $O(\alpha)=[o_{i,j,m}(\alpha)]$, $i\in {\cal S}^c$, $j\in {\cal S}^r$, $m\in {\cal Z}$, $\alpha\in {\cal A}$, defined by
        \begin{equation}\label{POMDP.op}
        O(\alpha):=\operatorname{Pr}(z_k|s_k,\alpha_k)=\operatorname{Pr}(z_k|s^c_k,s^r_k,\alpha_k),
    \end{equation}
    with 
    \begin{equation}
        o_{i,\ell,m}(\alpha)=\operatorname{Pr}(z_k=m|s^c_k=i,s^r_k=\ell,\alpha_k{=}\alpha),
    \end{equation}
    denoting the $(i,\ell,m)$--th element of $O(\alpha)$.
    \item[f)] The cost function $c_k(s,\alpha)$ incurred at time $k$ when the state is $s\in {\cal S}$ and action $\alpha\in {\cal A}$ is applied.
    \item[g)] The terminal cost function $c_N(s)$ incurred at  time $k=N$ when the state is $s\in {\cal S}$.
\end{enumerate}

The POMDP defined by \eqref{POMDP} represents a partially observable model in which the augmented system state $s_k\in {\cal S}$ consists of both known and unknown components. In particular, the  AoI $s^r\in {\cal S}^r$ is the known part of the system state while the channel state $s^c\in {\cal S}^c$ is the unknown (unobserved) part of the system state. Hence, the smart sensor must infer the unknown part of the system state  through some indirect observations $z_k\in {\cal Z}$ at each time $k$. 

To specify the information available to the smart sensor, let us define the set 
\begin{equation*}
    I_k:=\{s^r_0,z_0,\dots,z_{k-1},\alpha_0,\dots,\alpha_{k-1}\},\ k=1,2,\dots,N.
\end{equation*}
Notice from \eqref{eq.AoI.state} that given the information history  $I_k$, then the AoI state $s^r_k\in {\cal S}^r$ can be calculated for all $k$. It follows that, for each $k$, $I_k$ conveys the same amount of information as $\bar{I}_k:=\{I_k,s^r_1,\dots,s^r_k\}$.
Let us consider the set of all policies compatible with $I_k$. For any $g=\{g_0,g_1,\dots,g_{N-1}\}$, $g_k:I_k\longmapsto {\cal A}$ is feasible if
\begin{equation}
    \alpha_k=g_k(I_k)\in {\cal A},\quad \forall I_k, \ k=1,2,\dots,N-1.
\end{equation}
Let $G$ denote the set of all feasible policies. 

The next step in solving a POMDP is to define an information state, also referred to as the belief state. The information state at time $k$, denoted as $\pi_k$, is the posterior distribution over the unknown component of the system state conditioned on the available information,  that is,
\begin{equation}\label{belief.state}
\pi_k(j|I_k):=\operatorname{Pr}(s^c_k=j|I_k),\quad \forall j\in {\cal S}^c.
\end{equation}

Using Bayes' rule, the recursive relation of the information state from time $k$ to time $k+1$ is given by 
\begin{equation}\label{belief.recursion}
    \pi_{k+1}(s^c_{k+1}=j|I_{k+1})
    =\frac{\displaystyle \sum_{i\in {\cal S}^c}\tau^{r,z}_{i,\ell,q,m}(\alpha)\tau^c_{i,j}\pi_k(i|I_k)}{\displaystyle\sum_{i\in {\cal S}^c}\tau^{r,z}_{i,\ell,q,m}(\alpha)\pi_k(i|I_k)},
\end{equation}
with
\begin{align}\label{joint.distrib}
\tau^{r,z}_{i,\ell,q,m}(\alpha)&=\operatorname{Pr}(s^r_{k+1}=q|s^c_k=i,s^r_k=\ell,z_k=m,\alpha_k=\alpha)\nonumber\\
&\ \times \operatorname{Pr}(z_k=m|s^c_k=i,s^r_k=\ell,\alpha_k=\alpha).
\end{align}
To indicate that the information state is a function of the prior distribution $\pi_k$, the  action $\alpha_k$ and the observation $z_{k}$, we write 
\begin{equation}\label{eq.post}
    \pi_{k+1}(I_{k+1})=L_k(\pi_k(I_k),z_{k},\alpha_k),
\end{equation}
where $\pi_k(I_k)=(\pi_k(1|I_k),\dots,\pi_k(n_c|I_k))$ is an $n_c$--dimensional row vector. The proof of \eqref{belief.recursion} is omitted due to space limitations.

\begin{remark} 
        By utilizing the fact that (a) the information state $\pi_k$ provides a sufficient statistic for the information available to the smart sensor, that is,
        $\alpha_k=g_k(I_k)=g_k(\pi_k)\in {\cal A}$, and (b) for any fixed sequence of actions $\{\alpha_0,\dots,\alpha_k\}$, the sequence  $\{\pi_0,\dots,\pi_k\}$ is a Markov process, then the POMDP \eqref{POMDP} is equivalent to the corresponding fully observed Markov decision process \cite{varayia86}.
\end{remark}

\subsection{Performance Criterion}\label{subsec.performance.crit}
Each policy $g\in G$ incurs a series of costs 
\begin{equation}
c_k(s_k,\alpha_k)=\mbox{Tr}(P_k)+\epsilon_j(\alpha), 
\end{equation}
where $P_k$ is specified by \eqref{cov.eq.AoI}, and $\epsilon_j(\alpha)$ denotes the energy consumption modeled as a deterministic function of the channel state  $s^c_k=j\in S^c$ and action $\alpha_k=\alpha\in {\cal A}$.
    In HARQ protocols, retransmissions often involve transmitting a packet at a higher power level due to additional encoding to improve the chances of decoding the message at the receiver, especially when previous attempts have failed due to channel conditions. This attempt raises the energy cost $\epsilon_j$ for retransmitted packets compared to a simple new packet transmission.
Define the $N$--stage expected cost  by 
\begin{equation}\label{finite.cost}
    J(g):=\mathbb{E}^g\Big[\sum_{k=0}^{N-1}c_k(s_k,\alpha_k)+c_N(s_N)\Big],
\end{equation}
where $\mathbb{E}^g[\cdot]$ indicates the dependence of the expectation operation on the policy $g\in G$, and $c_N$ denotes the terminal cost, a given deterministic function $c_N(s_N):=\mbox{Tr}(P_N)$.

The optimal decision problem is that of selecting the admissible policy $g\in G$ such that the performance criterion \eqref{finite.cost} is minimized. In particular, the optimal decision problem is to choose $g^*\in G$ such that
\begin{equation}\label{ACOP}
J(g^*)=\inf_{g\in G}J(g)=J^*.
\end{equation}
A policy $g^*\in G$ that satisfies \eqref{ACOP} is called an optimal policy, and the corresponding $J^*$ is called the minimum cost.

\section{Problem Solution}\label{sec.Sol}
In this section, we provide via dynamic programming the solution of the optimal decision problem \eqref{ACOP}.

Let $V_k(\alpha^g_{[k,N-1]},I_k)$ denote the value function on the time horizon $k,k+1,\dots,N$ given an optimal policy $g^*_t$, $t=1,2,\dots,k-1$ defined by
\begin{equation}
    V_k(\alpha^g_{[k,N-1]},I_k) = \mathbb{E}\Big[\sum_{t=k}^{N-1} c_t(s_t^g,\alpha_t^g)+c_N(s_N^g)|I_k\Big],
\end{equation}
where $\alpha^g_{[k,N-1]}$ denotes the restriction of policies in $[k,N-1]$. 
Let us also define the probability row vector $\pi=(\pi(1),\dots, \pi(n_c))\in \Pi_1({\cal S}^c)$ where
\begin{equation*}
    \Pi_1({\cal S}^c):=\{\pi\in \mathbb{R}^{n_c}:\sum_{i\in {\cal S}^c}\pi(i)=1, \pi(i)\geq 0, \forall i\in {\cal S}^c\}.
\end{equation*}


The main theorem of this paper follows.
\begin{theorem}\label{main.theorem}
    Define recursively the functions $V_k(\pi,s^r)$, $0\leq k\leq N$, $\pi\in \Pi_1({\cal S}^c)$, $s^r\in {\cal S}^r$ by
    \begin{subequations}\label{DP.equation.main}
    \begin{align}\label{DP.equation.a}
        &V_N(\pi,s^r)=\mathbb{E}[c_N(s^c_N,s^r)|\pi_N(I_N)=\pi,s^r_N=s^r],\\
        &V_k(\pi,s^r)=\min_{\alpha_k\in {\cal A}}\mathbb{E}[c_k(s^c_k,s^r,\alpha_k) \label{DP.equation.b}\\
        & {+} V_{k+1}(L_k(\pi,z_{k},\alpha_k), L^r_k(s^r,z_k,\alpha_k))|\pi_k(I_k){=}\pi,s^r_k{=}s^r].\nonumber
    \end{align}
    \end{subequations}
    \begin{enumerate}
        \item Let $g\in G$. Then,
        \begin{equation}\label{thm.part.a}
            V_k(\pi_k(I_k),s^r_k)\leq V_k(\alpha^g_{[k,N-1]},I_k).
        \end{equation}
        \item Let $g\in G$ be such that for all $\pi\in \Pi_1({\cal S}^c)$, $g_k(\pi)$ achieves the infimum in \eqref{DP.equation.b}. Then, $g$ is optimal and
        \begin{equation}\label{thm.part.b}
            V_k(\pi_k(I_k),s^r_k)=V_k(\alpha^g_{[k,N-1]},I_k).
        \end{equation}
    \end{enumerate}
\end{theorem}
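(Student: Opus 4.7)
The plan is to prove both parts simultaneously by backward induction on the time index $k$, starting from the terminal time $k=N$ and propagating the bound (with equality for policies achieving the infimum) back to $k=0$. This is the standard dynamic programming argument adapted to the POMDP setting, where the essential ingredient is that the pair $(\pi_k, s^r_k)$ acts as a sufficient statistic for $I_k$ with respect to both the instantaneous cost and the one-step transition dynamics.

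For the base case $k=N$, both $V_N(\pi_N(I_N), s^r_N)$ and $V_N(\alpha^g_{[N,N-1]}, I_N)$ reduce to $\mathbb{E}[c_N(s^c_N, s^r_N)\mid I_N]$, which by the tower property and the definition of $\pi_N$ in \eqref{belief.state} equals $\mathbb{E}[c_N(s^c_N, s^r) \mid \pi_N(I_N)=\pi, s^r_N=s^r]$ evaluated at the realized $(\pi_N(I_N), s^r_N)$. For the inductive step, assume the claim holds at $k+1$ and write
\begin{align*}
V_k(\alpha^g_{[k,N-1]}, I_k) &= \mathbb{E}\bigl[c_k(s^c_k, s^r_k, \alpha_k^g) \\
&\quad + V_{k+1}(\alpha^g_{[k+1,N-1]}, I_{k+1}) \,\bigm|\, I_k\bigr].
\end{align*}
Applying the induction hypothesis inside the conditional expectation lower bounds the second term by $V_{k+1}(\pi_{k+1}(I_{k+1}), s^r_{k+1})$, and then using the recursions \eqref{eq.post} and \eqref{eq.AoI.state} we replace $\pi_{k+1}(I_{k+1})$ by $L_k(\pi_k(I_k), z_k, \alpha_k)$ and $s^r_{k+1}$ by $L^r_k(s^r_k, z_k, \alpha_k)$. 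Taking the infimum over $\alpha_k \in \mathcal{A}$ and invoking the sufficient statistic property to collapse the conditioning on $I_k$ into conditioning on $(\pi_k(I_k), s^r_k)$ yields exactly the right-hand side of \eqref{DP.equation.b}, establishing part (1). For part (2), the policy that pointwise achieves the infimum in \eqref{DP.equation.b} turns every inequality above into an equality, and the claim follows.

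The main technical obstacle is justifying the sufficient statistic reduction, namely the identity
\begin{align*}
&\mathbb{E}[c_k(s^c_k, s^r_k, \alpha_k) + V_{k+1}(L_k(\pi_k, z_k, \alpha_k), L^r_k(s^r_k, z_k, \alpha_k)) \mid I_k] \\
&\qquad = \mathbb{E}[c_k(s^c_k, s^r, \alpha_k) + V_{k+1}(\cdot,\cdot) \mid \pi_k(I_k)=\pi, s^r_k=s^r].
\end{align*}
This requires two facts noted in the paper: the AoI $s^r_k$ is a measurable function of $I_k$, and by construction $\pi_k(\cdot \mid I_k) = \Pr(s^c_k = \cdot \mid I_k)$ is the posterior over the unobserved channel state, so the conditional law of $(s^c_k, z_k, s^c_{k+1})$ given $I_k$ depends on $I_k$ only through $(\pi_k(I_k), s^r_k)$ via the transition kernels $T^c$, $T^r(\alpha)$, and the observation kernel $O(\alpha)$ introduced in \eqref{POMDP.trans}--\eqref{POMDP.op}. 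Combined with the Markov property of $\{\pi_k\}$ under any fixed action sequence (noted in the remark following \eqref{eq.post}), this smoothing identity closes the induction and completes the argument.
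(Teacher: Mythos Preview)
Your proposal is correct and follows essentially the same approach as the paper, which simply defers to the standard backward-induction dynamic programming argument in \cite{varayia86}. You have written out in detail exactly that argument---establishing the base case at $k=N$, propagating the bound backward via the tower property and the sufficient-statistic property of $(\pi_k,s^r_k)$, and recovering equality for the infimum-achieving policy---so there is nothing to add.
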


\begin{proof}
    By following similar steps as  in  \cite{varayia86}. 
\end{proof}

The expectation in the right side of \eqref{DP.equation.a} is given by
\begin{align*}
&\sum_{j\in {\cal S}^c}c_N(j,s^r)\operatorname{Pr}(j|\pi_N(I_N)=\pi,s^r_N=s^r)\\
&=\sum_{j\in {\cal S}^c}c_N(j,s^r)\operatorname{Pr}(j|\pi_N(I_N)=\pi)=\sum_{j\in {\cal S}^c}c_N(j,s^r)\pi(j) ,
\end{align*}
where the first equality follows since $\pi_N(I_N)=\pi$ summarizes all the information of $I_N$, including the information of $s^r_N=s^r$.
The expectation in the right side of \eqref{DP.equation.b} is 
\begin{align*}
 &   \sum_{j\in {\cal S}^c}c_k(j,s^r,\alpha)\pi(j)\ \mathclap{+}\sum_{m\in {\cal Z}
 }V_{k+1}(L_k(\pi,m,\alpha),L^r_k(s^r,m,\alpha))\\
&\quad\times\operatorname{Pr}(m|\pi_k(I_k)=\pi,s_k^r=s^r,\alpha_k=\alpha),\nonumber
\end{align*}
where the conditional probability of $z_k=m$ given $\pi_k(I_k)=\pi$, $s^r_k=s^r$, and $\alpha_k=\alpha$, is 
\begin{align*}
    &\operatorname{Pr}(z_{k}{=}m|\pi_k(I_k){=}\pi,s^r_k{=}s^r,\alpha_k{=}\alpha)
    =\sum_{i\in {\cal S}^c}  o_{i,s^r,m}(\alpha)\pi(i).
\end{align*}

\begin{figure*}[t]
    \centering
    \begin{subfigure}[t]{0.475\linewidth}
         \centering
         \includegraphics[width=\linewidth]{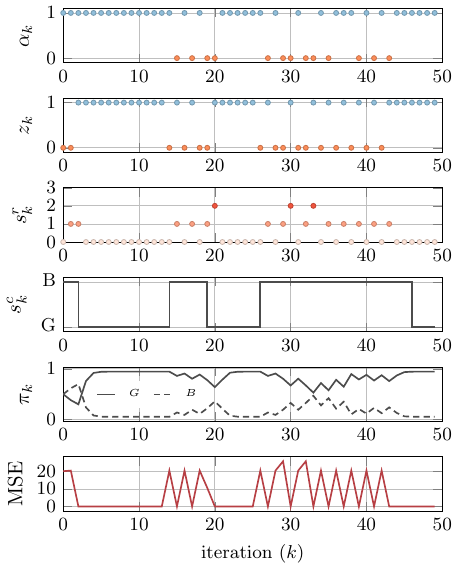}
         \caption{HARQ under $T^c_1$ with $\lambda=0.5$.}
         \label{fig:harq_pbvi}
    \end{subfigure}
    \hfill
    \begin{subfigure}[t]{0.475\linewidth}
         \centering
         \includegraphics[width=\linewidth]{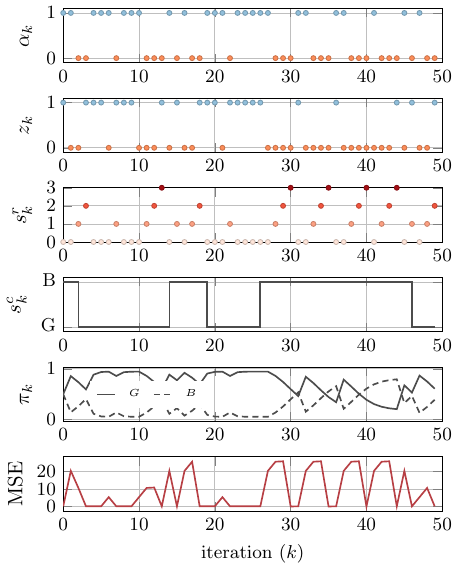}
         \caption{ARQ under $T^c_1$ with $\lambda=1$.}
         \label{fig:arq_pbvi}
    \end{subfigure}
    \caption{Evolution of the POMDP states and the MSE per iteration $k$.}
    \label{fig:sensor}
\end{figure*}

The PBVI algorithm \cite{smallwood1973optimal} is employed to numerically  approximate the optimal policy. The PBVI algorithm, using a  finite set of representative belief points $\mathcal{B}_\pi=\{\pi^{(1)},\dots, \pi^{(M)}\}$,  $\pi^{(\cdot)}\in \Pi_1({\cal S}^c)$, approximates the value function  and iteratively solves the dynamic programming \eqref{DP.equation.main} for each AoI state. 
This recursive process allows the algorithm to compute an optimal transmission policy while considering both the AoI and channel state uncertainty, resulting in a policy that balances between information freshness and reliability of the channel state estimation.

\section{Numerical Experiments}\label{sec:numerical_experiments}
Consider a discrete-time LTI dynamical system with
$$
    A = \begin{bmatrix}
        0.9974 & 0.0539\\
        -0.1078 & 0.1591
    \end{bmatrix},\quad 
    C= \begin{bmatrix}
        1 & 0
    \end{bmatrix}.$$
The smart sensor employs the Kalman filter defined in \eqref{eq:kalman_filter} with $R_w=\text{diag}(0.25,0.25)$, and $R_v = 0.05$.
 We model the channel, over which the packets are transmitted, as a two-state time homogeneous Markov chain. The channel can be either in a good ($G$) or bad ($B$) state. We consider two different transition probability  matrices 
$$
T^c_1 = \begin{bNiceArray}{*2c}[
first-row,code-for-first-row=\scriptstyle \color{black!50},
last-col,code-for-last-col=\scriptstyle \color{black!50}
]
    G & B & \\
    0.95 & 0.05 & G\\
    0.1 & 0.9 & B
\end{bNiceArray}\;\;,\quad
T^c_2 = \begin{bNiceArray}{*2c}[
first-row,code-for-first-row=\scriptstyle \color{black!50},
last-col,code-for-last-col=\scriptstyle \color{black!50}
]
    G & B & \\
    0.5 & 0.5 & G\\
    0.5 & 0.5 & B
\end{bNiceArray}\;\;,
$$
with invariant probability distributions $p=[0.667\ 0.333]$ and $p=[0.5\ 0.5]$, respectively.
The probability of error of a packet transmission depends on the quality (state) of the channel. Hence, we assume that the probability of error of a fresh (new) transmission when in state $G$ is $q_G=0.2$, while in state $B$ is $q_B=0.8$. In this example, we allow the smart sensor to possibly retransmit the same packet for $n_r=3$ times. Hence, the probability of error of a packet (re)-transmitted for $r\in\mathcal{S}^r$ times is given by \eqref{HARQ.eqn}, with $\lambda \in (0,1]$.

For the recursive update of the belief state $\pi_k$, \eqref{belief.recursion} is used, with its right side specified by
\begin{align*}
\tau^r_{i,\ell,q}(0) &= \begin{cases}
    p_i, & \text{ if } m = 0 \land q=0 \land \ell=n_r,\\
    p_i, & \text{ if } m = 0 \land q=\ell+1 \land 0< \ell<n_r,\\
    p_i, & \text{ if } m = 1 \land q=0 \land \ell>0, \\
    0, & \text{ otherwise, } 
\end{cases}\\
\tau^r_{i,\ell,q}(1) &= \begin{cases}
    p_i, & \text{ if } m = 0 \land q=1,\\
    p_i, & \text{ if } m = 1 \land q=0,\\
    0, & \text{ otherwise, } 
\end{cases}\\
o_{i,\ell,0}(\alpha_k) &= \begin{cases}
    0, & \text{ if } \alpha_k = 0 \land \ell=0,\\
    q_i\lambda^\ell, & \text{ if } \alpha_k = 0 \land \ell>0,\\
    q_i, & \text{ if } \alpha_k = 1,
\end{cases}\\
o_{i,\ell,1}(\alpha_k)&=1-o_{i,\ell,0}(\alpha_k),
\end{align*}
where $\ell,q=0,1,\dots,n_r$, $i\in \{G,B\}$.



Fig. \ref{fig:sensor}\subref{fig:harq_pbvi} and Fig. \ref{fig:arq_pbvi} depict the execution of the  aforementioned setup using the same realization of $T^c_1$, for  $\lambda=0.5$ and $\lambda=1$, respectively. In both figures, we record the variables held by the smart sensor, and the mean squared error (MSE) at the remote estimator. 
In the following, we compare the results obtained under HARQ and ARQ protocols:
\begin{list4}
\item Under both protocols, the sensor must balance the trade-offs between improving the MSE (via lower AoI) and saving energy, particularly when retransmissions are required due to the bad channel state $s^c_k=B$.
\item Comparing the two protocols, we  observe that the main difference lies in how AoI evolves. 
Combined with the sensor's actions $\alpha_k$, HARQ reduces the number of transmissions, particularly under poor channel conditions.
\item The belief state $\pi_k$ is updated at each time $k$  based on  $z_k$ and  $\alpha_k$. When 
{\ack}  is received, $\pi_k$ remains near $s^c_k=G$, indicating high confidence in the current channel state. Upon 
\nack, $\pi_k$ shifts, indicating increased uncertainty about the channel state. This shift occurs because a {\nack} suggests a potential channel state, prompting $\pi_k$ to increase the probability of $s^c_k=B$.
\end{list4}

Finally, Fig. \ref{fig:MSE_function_lambda} depicts MSE averaged over $100$ realizations of the Markov chains $T^c_1$ and $T^c_2$, with each result corresponding to a specific value of $\lambda$. The Markov chain $T^c_2$  represents the worst-case scenario since it implies complete uncertainty about the channel state transitions. As a result, the MSE  is higher than the one corresponding to $T^c_1$.


\section{Conclusions}\label{sec.concl}
We  design an optimal transmission policy for remote state estimation over packet-dropping wireless channels with imperfect channel state information. 
We formulate  this problem as a  finite horizon POMDP with an augmented state-space that incorporates both AoI and  unknown channel state. An information state  serves as a sufficient statistic for decision-making, and the optimal policy is obtained via dynamic programming and solved numerically using the PBVI algorithm.

\begin{figure}[t]
    \centering
    \includegraphics[width=0.8\linewidth]{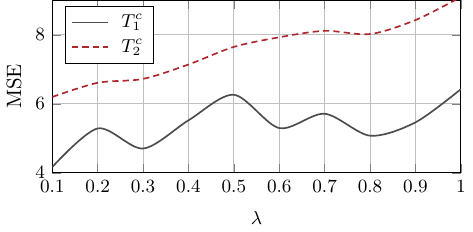}
    \caption{MSE for different realizations of  $T^c_1$ and $T^c_2$\vspace{-0.23cm}.}
    \vspace{-0.23cm}
    \label{fig:MSE_function_lambda}
\end{figure}

\bibliographystyle{ieeetr}
\bibliography{references}

\end{document}